\newtheorem{lemma}{\textbf{Lemma}}
\newtheorem{thm}{\textbf{Theorem}}
\newtheorem{example}{\textbf{Example}}
\newcommand{\ie}{{i.e.}}
\newcommand{\mc}[1]{\mathcal{#1}}
\newcommand{\mbb}[1]{\mathbb{#1}}
\title{Multi-Message Private Information Retrieval\\with Private Side Information}
\author{\IEEEauthorblockN{Seyed Pooya Shariatpanahi}
	\IEEEauthorblockA{School of Computer Science \\Institute for Research  \\ in Fundamental Sciences (IPM) \\
		Email: pooya@ipm.ir}
	\and
	\IEEEauthorblockN{Mahdi Jafari Siavoshani}
	\IEEEauthorblockA{Department of Computer Engineering\\
		Sharif University of Technology \\
		Email:  mjafari@sharif.edu}
	\and
	\IEEEauthorblockN{Mohammad Ali Maddah-Ali}
	\IEEEauthorblockA{Nokia Bell Labs\\
		Holmdel, NJ, USA\\
		Email:  mohammad.maddahali@nokia-bell-labs.com}	
	
}
\begin{document}
\maketitle

\begin{abstract}
We consider the problem of private information retrieval (PIR) where a single user with private side information aims to retrieve multiple files from a library stored (uncoded) at a number of servers. We assume the side information at the user includes a subset of files stored privately (\ie, the server does not know the indices of these files). In addition,  we require that the identity of the requests and  side information at the user are not revealed to any of the servers. The problem involves finding the minimum load to be transmitted from the servers to the user such that the requested files can be decoded with the help of received and side information. By providing matching lower and upper bounds, for certain regimes, we characterize the minimum load imposed to all the servers (\ie, the capacity of this PIR problem). Our result shows that the capacity is the same as the capacity of a multi-message PIR problem without private side information, but with a library of reduced size. The effective size of the library is equal to the original library size minus the size of side information.
\end{abstract}

\begin{IEEEkeywords}
Private information retrieval, side information.
\end{IEEEkeywords}

\section{Introduction}
Private Information Retrieval (PIR) is the problem of downloading contents from a library stored at a number of servers, without revealing the indices of requested contents to any of the servers. This problem was first considered in the computer science society, from a computational complexity viewpoint \cite{Chor1998}, \cite{Gasarch04asurvey}, which resulted in elegant cryptographic PIR schemes even for the simple scenario of a single user connected to a single server. Recently, in \cite{SunJafar2016_PIR_Capacity},  the capacity of this problem has been characterized from an information theoretic perspective\footnote{This notion of information theoretic privacy is much stronger than the notion of cryptographic privacy used in the computer science society.}. 
Here, the \emph{capacity} refers to the supremum of the number of decoded bits per each downloaded bit.
The setup considered in \cite{SunJafar2016_PIR_Capacity} consists of multiple non-colluding databases (servers) having access to a common library, from which a single user requests one file, via a shared link. Interestingly, as shown in \cite{SunJafar2016_PIR_Capacity}, the user can wisely send queries to the non-cooperative servers so that the aggregate load imposed to servers is minimized, while no server gets any information about the requested file index. Following the success of \cite{SunJafar2016_PIR_Capacity} in characterizing the exact capacity of the problem, many works have investigated extensions of this setup, such as coded databases \cite{Tajeddine2018,Banawan2018},  colluding databases \cite{SunCollude2018}, multi-message PIR \cite{Banawan2017}, adversarial PIR \cite{Wang2017}, PIR with asymmetric traffic at servers \cite{BanawanAsymm2018}, and private function retrieval \cite{SunJafar_ArXiv_CPC, MirmohseniMaddah_ArXiv_PFR}.

An important extension to the PIR problem is the case when the user has access to some form of stored data. 
This data can be provided by caching data during network off-peak hours (known as the \emph{Cache Content Placement} phase), to reduce the required load of servers when the actual request arrives (known as the \emph{Content Delivery} Phase). 
The first work considering this line is \cite{Tandon17}, which characterizes the capacity of a public cache setup for a multi-server scenario\footnote{By \emph{public} we mean that all the servers know the cache contents.}. In contrast, \cite{WeiUlukus_PIR_Cache_ArXiv17} considers a PIR problem with private cache where the servers do not know the cache contents.
Moreover, \cite{Wei2017} considers a scenario where the cache content placement is done via the same servers used in the content delivery phase, hence resulting in a partially known cache scenario, \ie, each server only knows which contents itself has sent during the cache content placement phase.

While in the cache aided PIR problem one can design the cache contents, in the PIR problem with side information it is assumed that the user has access to a given subset of the library files.
Along this direction, the authors in \cite{Kadhe2017} consider a single server scenario with private side information, and characterize the capacity by reducing the PIR setup to an index coding problem. The work \cite{ChenJafar2017_PIR_PSI} extends this result to the original multiple server setup with a user having a private side information. 

In this paper, we consider a new PIR scenario, namely multi-message PIR with private side information. In this setup we assume that the user requires multiple messages and at the same time has access to a private side information, which is a subset of the files in the library.
This generalized setup includes the scenarios considered in \cite{Banawan2017} (i.e., multi-message PIR without any side information) and \cite{ChenJafar2017_PIR_PSI} (i.e., single-message PIR with private side information) as its special cases. We propose lower and upper bounds on the required load, which match in some regimes of problem parameters, and thus, characterize the PIR capacity of those regimes. 
Specifically, we establish that if the user has access to a private side information of size $M$ files, then the capacity of the problem is the same as the capacity of the multi-message PIR problem (see \cite{Banawan2017}) without any side information, but with a library of size reduced by $M$. This result is a generalization of the same finding in \cite{ChenJafar2017_PIR_PSI} for the single-message PIR setting, and suggests the following conjecture: \emph{The capacity of any PIR problem with private side information of size $M$ is the same as capacity of the same problem without side information, but with a library size reduced by $M$}.

The structure of paper is as follows. In Section \ref{Sec:ProbSetup} we describe the problem setup. In Section \ref{Sec:MainRes} our main result  for the capacity of a multi-message PIR problem with private side information is stated. Sections \ref{Sec:Achivability} and \ref{Sec:Converse} provide the achievability and converse proofs, respectively. Finally, Section \ref{Sec:Conclude} concludes the paper.

\section{Problem Setup}\label{Sec:ProbSetup}
Consider a single user connected to $N$ servers having access to a library of $K$ files $\{W_1,\ldots,W_K\}$, where each file consists of $L$ symbols chosen independently and uniformly at random from a finite field $\mathbb{F}$, \ie, $W_i = (w_i(1),\ldots,w_i(L))$. We assume the user has a private side information which contains $M$ files from the library,  denoted by $W_\mathcal{S} \triangleq \{W_i: i \in \mathcal{S} \}$, $\mc{S}\subseteq [1:K], |\mathcal{S}|=M$, where the servers do not know the index set $\mathcal{S}$. The user wishes to retrieve $P$ new file $W_{\mathcal{P}}\triangleq \{W_i: i \in \mathcal{P}\}$, $\mc{P}\subseteq [1:K], |\mathcal{P}|=P$, where $\mathcal{P} \cap \mathcal{S} = \varnothing$. To this end, the user sends a set of queries $Q^{\mc{S},\mc{P}} \triangleq \{Q^{\mathcal{S},\mathcal{P}}_n, n \in [1:N]\}$, where server $n$ just receives $Q^{\mathcal{S},\mathcal{P}}_n$ without having any access to other queries (this is known as the non-colluding servers assumption).  The queries should be designed such that the servers do not obtain any information about neither the requested file index set $\mathcal{P}$ nor the side information index set $\mathcal{S}$ as formally stated in the following
\[
I(Q^{\mc{S},\mc{P}}_n ; \mc{P}, \mc{S}) = 0, \quad \forall n\in [1:N],  \quad \text{\emph{(privacy constraint)}}.
\]
After receiving the queries, each server $n \in [1:N]$ sends the answer $A^{\mc{S},\mc{P}}_n$ which is a function of library contents $W_{1:K}$ and the query $Q^{\mathcal{S},\mc{P}}_n$ received at that server. The answers must be designed such that there exists a decoding function $\Psi$ which satisfies
\[
\Psi(A^{\mc{S},\mc{P}}_{[1:N]}, Q^{\mc{S},\mc{P}}_{[1:N]}, W_\mc{S}) = W_\mc{P}, \quad \text{\emph{(decodability constraint)}}.
\]

The objective is to characterize the minimum required download symbols defined as follows
\[
D^{\mathsf{PSI}}(N,K,P,M) \triangleq \inf \frac{1}{PL} \sum_{n=1}^{N} H(A^{\mc{S},\mc{P}}_n),
\]
so that the \emph{privacy} and \emph{decodability} constraints are satisfied, where the infimum is taken over all possible strategies. Equivalently, the PIR capacity can be defined as $C^{\mathsf{PSI}}(N,K,P,M) \triangleq \frac{1}{D^{\mathsf{PSI}}(N,K,P,M)}$.

Note that in the two special cases of $P=1$ and $M=0$ our setup reduces to those studied in  \cite{ChenJafar2017_PIR_PSI} and \cite{Banawan2017}, respectively.

\section{Main Result}\label{Sec:MainRes}
In this paper, we characterize the capacity of a multi-message PIR problem with side information. Our main result is formally stated in Theorem~\ref{thm:MainResult}.

\begin{thm}\label{thm:MainResult}
For $P \geq \frac{K-M}{2}$ we have 
\begin{equation}\label{Eqqqqq}
D^{\mathsf{PSI}}(N,K,P,M) = 1 + \frac{K-M-P}{PN}.
\end{equation}
Moreover, for $P \leq \frac{K-M}{2}$ and $\frac{K-M}{P} \in \mathbb{N}$ we have 
\begin{equation}
D^{\mathsf{PSI}}(N,K,P,M) = \frac{1- \left(\frac{1}{N}\right)^{(K-M)/P}}{1- \left(\frac{1}{N}\right)}.
\end{equation}
\end{thm}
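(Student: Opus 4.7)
The key observation is that the capacity formulas in Theorem~\ref{thm:MainResult} agree exactly with the multi-message PIR capacity of \cite{Banawan2017} after replacing $K$ by $K-M$. My plan is therefore to prove matching converse and achievability, each of which reduces the PSI problem to an equivalent multi-message PIR problem on a library of size $K-M$.

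For the converse, I would use a genie-aided argument. Imagine a genie that reveals the side-information index set $\mc{S}$ to all servers. Any scheme meeting the original privacy constraint $I(Q^{\mc{S},\mc{P}}_n;\mc{P},\mc{S})=0$ also satisfies the conditional privacy $I(Q^{\mc{S},\mc{P}}_n;\mc{P}\mid\mc{S})=0$ (by $I(X;Y,Z)=I(X;Z)+I(X;Y\mid Z)$ and non-negativity of mutual information), so it remains valid in the weaker setting. Once $\mc{S}$ is revealed, the $M$ side-information files are common knowledge (their indices are known to all, their contents are known to the user and decouple from the rest of the library by independence), and the problem reduces to standard multi-message PIR with $P$ requested files from a library of $K-M$ unknown files. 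Applying the multi-message PIR converse of \cite{Banawan2017} on the reduced instance then yields the claimed lower bound on $D^{\mathsf{PSI}}(N,K,P,M)$ in both regimes of Theorem~\ref{thm:MainResult}; the regime hypothesis $(K-M)/P\in\mathbb{N}$ for small $P$ is exactly the hypothesis required by the cited converse on the reduced problem.

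For the achievability, I would construct a PSI scheme from the multi-message PIR scheme of \cite{Banawan2017} by using the side information to emulate a reduced library of $K-M$ files. The idea is to form $K-M$ ``virtual'' files as appropriate linear combinations of $W_{1:K}$ such that (i) the user can invert them via its knowledge of $W_\mc{S}$ into copies of $W_{[1:K]\setminus\mc{S}}$, with the first $P$ virtual files aligned to $W_\mc{P}$, and (ii) the linear queries induced on the actual library $W_{1:K}$ have a distribution that is invariant under $(\mc{P},\mc{S})$. The user then runs the Banawan--Ulukus scheme for $P$ out of $K-M$ virtual files, translating each abstract query into a linear combination on $W_{1:K}$; decodability follows because the user subtracts the known $W_\mc{S}$-contribution from every answer, while privacy follows from the symmetrizing choice of random bijection and random mixing coefficients. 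The download cost equals the multi-message PIR rate for $P$ out of $K-M$ messages.

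The main obstacle will lie in the achievability, specifically in designing the virtual files and the random mixing so that the per-server query distributions are identical across all admissible $(\mc{P},\mc{S})$ pairs, a requirement that is more delicate than in the two limiting cases $M=0$ of \cite{Banawan2017} or $P=1$ of \cite{ChenJafar2017_PIR_PSI} taken separately. In particular, the random coefficients at positions in $\mc{S}$ must not be statistically distinguishable from the coefficients appearing at positions in $[1:K]\setminus\mc{S}$, and the subset-of-subset counts induced by the Banawan--Ulukus query patterns must be masked by the random relabeling; this will likely require additional randomization beyond the most naive construction. The converse, by contrast, is conceptually straightforward once the genie reduction is set up.
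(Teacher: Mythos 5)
Your converse and achievability both diverge from the paper's, and in both cases the difference matters.

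\paragraph{Converse.} You propose a genie that reveals $\mc{S}$ to the servers and then invokes the \cite{Banawan2017} converse on a library of $K-M$ files. This is a legitimate relaxation (revealing $\mc{S}$ can only enlarge the set of feasible schemes, so any lower bound on the genie problem lower-bounds $D^{\mathsf{PSI}}$), but after the genie reveals $\mc{S}$ you have \emph{not} yet landed on vanilla multi-message PIR on $K-M$ files: the servers' answers can still depend on $W_\mc{S}$, and the user holds $W_\mc{S}$ as side information, so what you get is MMPIR on $K-M$ files \emph{plus} shared correlated randomness $W_\mc{S}$ between user and servers. To close the gap you need an extra step, e.g.\ further condition on $W_\mc{S}$, check that for each fixed $(\mc{S}_0,w_\mc{S})$ the induced scheme satisfies the MMPIR privacy and decodability constraints, and use $\sum_n H(A_n)\ge \sum_n H(A_n\mid\mc{S},W_\mc{S})$. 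The paper does not take the genie route; it works directly with entropies conditioned on $(Q,W_\mc{S})$, reuses the answer-symmetry lemma from \cite{Banawan2017}, and re-runs the Banawan--Ulukus induction starting from $W_{M+1}$ rather than $W_1$ (Lemmas~\ref{lem:ConvHighP_lem1}--\ref{lem:InterferenceCondLemma}). Your route is conceptually shorter if the shared-randomness step is handled, but as written there is a gap.

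\paragraph{Achievability.} Here the proposals genuinely part ways, and yours has the more serious problem, as you yourself flag. You want to synthesize $K-M$ virtual files and run \cite{Banawan2017} on them, which forces the queries to encode a description of the $\mc{S}$-dependent linear map; that is exactly the privacy leak you then try to mask with ``additional randomization'' without saying how. The paper avoids this entirely: it sends to the servers \emph{exactly} the queries of the $K$-file Banawan--Ulukus scheme (these are already $\mc{P}$-private and do not depend on $\mc{S}$ at all), and the only modification is that each server applies a \emph{fixed}, $\mc{S}$-independent $[2p-q,\,p]$ MDS code to its $p$ answer symbols and transmits only the $p-q$ non-systematic symbols. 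The user reconstructs the $q$ symbols that are $k$-sums of types entirely inside $\mc{S}$ from its side information, recovers the remaining answers by the MDS property, and then decodes as in the $M=0$ scheme. Privacy is automatic, and the whole proof collapses to the single algebraic identity
\begin{equation*}
p(N,K)-q(N,K,M)=p(N,K-M),
\end{equation*}
which the paper verifies in the two parameter regimes by substituting the $\alpha_{N,K}(k)$, $\beta_{N,K}(k)$, $c_{N,K}$ coefficients of \cite{Banawan2017}. I would recommend you replace the virtual-file construction with this MDS-outer-code construction; it removes the obstacle you identified rather than trying to overcome it.
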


It is worth mentioning that Theorem~\ref{thm:MainResult} can be equivalently stated as
\[
D^{\mathsf{PSI}}(N,K,P,M) = D^{\mathsf{PSI}}(N,K-M,P,0),
\]
under the assumptions of theorem, where $D^{\mathsf{PSI}}(N,K-M,P,0)$ is characterized in \cite{Banawan2017}. This implies that for the multi-message PIR problem, introducing private side information of size $M$ into the problem setup reduces the problem's capacity to the capacity of a multi-message PIR problem without side information but with a library of size $K-M$.
Note that our result generalizes the same finding reported in \cite{ChenJafar2017_PIR_PSI} for the single message PIR problem with side information.

The above library size reduction effect would be trivial if the privacy constraint was not required for side information. However, we prove that the same performance can be achieved even with preserving the privacy constraint for side information.
The main ingredient of the proposed achievable scheme is to use an outer layer of an MDS code to leverage the private side information available at the user.

\section{The Achievability Proof}\label{Sec:Achivability}
Let us first start with a motivating example for the case of $P\ge \frac{K-M}{2}$. 
\begin{example}
Suppose we have $N=2$ servers, $K=4$ files, $M=1$ file as side information, and the user requests $P=2$ new files. First, we review the achievable scheme proposed in \cite{Banawan2017} which is designed for the $M=0$ case, \ie, no side information. In their scheme, each file contents is permuted independently and uniformly at random, resulting in permuted files $A$, $B$, $C$, and $D$. Moreover, it is assumed that these permutations are not known by the servers. Next, each of the files is partitioned into $N^2=4$ equal-sized chunks, \ie, $A=(a_1,\ldots,a_4)$, $B=(b_1,\ldots,b_4)$, $C=(c_1,\ldots,c_4)$, and $D=(d_1,\ldots,d_4)$.

Their scheme consists of two phases. Suppose the user requires to privately retrieve files $A$ and $B$. 
In the first phase, each server sends a chunk from each file, \ie, Server~1 sends $a_1, b_1, c_1, d_1$ and Server~2 sends $a_2, b_2, c_2, d_2$.
In the second phase, the scheme uses a Reed-Solomon generator matrix in $\mathbb{F}_q$ as follows
\[
G_{2\times 3} = \left[ \begin{matrix}
1 & 1 & 1 & 1 \\
1 & 2 & 3 & 4
\end{matrix} \right].
\]
The user generates two new matrices $G_1$ and $G_2$ by applying two independent random permutations on the columns of $G$. Then, the user requests from the first and second servers
\[
G_1 \left[ \begin{matrix} a_3\ b_3\ c_2\ d_2 \end{matrix} \right] ^\mathsf{T} \quad\quad\text{and}\quad\quad
G_2 \left[ \begin{matrix} a_4\ b_4\ c_1\ d_1 \end{matrix} \right]^\mathsf{T},
\]
respectively. Since user has received $c_2$ and $d_2$ from the second server in the first phase, it can decode $a_3$ and $b_3$ from the above two linear equations sent by the first server in the second phase. Similarly, it can decode $a_4$ and $b_4$. Thus, the user has retrieved files $A$ and $B$. This scheme results in the load of $D^{\mathsf{PSI}}(N=2,K=4,P=2,M=0)=3/2$ per decoded file.

Now, suppose the user has access to a private side information of size $M=1$. We use the fact that the user can construct some of the above transmissions from its side information directly. 
In the above, Server~1 sends $6$ coded chunks where one of the file chunks in the first phase is already available as side information, which can be used to reduce the load of the first server.
Since the side information should remain private to the server, we employ an MDS generator matrix $G_{5\times 6}$. Then Server~1 sends $5$ linear equations of the original $6$ coded chunks with coefficients obtained from the rows of $G_{5\times 6}$.
By removing the term already available as side information, the user can form a linear system of $5$ equations to decode the remaining chunks.
The same procedure is followed by the second server. This results in the load of $D^{\mathsf{PSI}}(N=2,K=4,P=2,M=1)=5/4$ per decoded file. Notice that as stated in Theorem~\ref{thm:MainResult}, this scheme achieves the optimal load.

 \hfill $\square$
\end{example}
Although the above example was stated for the case $P\ge \frac{K-M}{2}$, since the main idea of introducing the role of side information into the achievable scheme for the other case of $P\le \frac{K-M}{2}$ is the same, we skip presenting an example for this case. In the following, we explain the general achievable scheme.

The proposed scheme is along the same line introduced in \cite{ChenJafar2017_PIR_PSI}. Let us first review the concept of linear PIR schemes. 
Suppose the user chooses $\pi_i$, $i \in [1:K]$, independently and uniformly at random from the set of all permutations of $[1:L]$, hidden from the servers.
We define the scrambled version of file $W_i$ as $U_i=(u_i(1),\ldots,u_i(L))=(w_i(\pi_i(1)),\ldots,w_i(\pi_i(L)))$. 
Then, we group these permuted symbols into chunks of $c$ symbols, \ie, $U_i = (v_i(1),\ldots,v_i(\frac{L}{c}))$ where $v_i(j) = \big( u_i( (j-1)c+1),\ldots, u_i(jc) \big)$ for $j\in[1: L/c]$, in which for the sake of presentation clarity we have assumed that $c$ divides $L$.

A $k$-sum of type $(j_1,\ldots,j_k)$ with coefficients vector $(\delta_1,\ldots,\delta_k)$ of chunks is defined as 
$\delta_1 v_{j_1}(i_1)+\cdots+\delta_k v_{j_k}(i_k)$ 
where $j_1,\ldots,j_k$ are distinct elements of $[1:K]$, $i_1,\ldots,i_k\in [1:L/c]$, $\delta_i\in\mbb{F}$, and all the operations are performed element-wise in $\mbb{F}$.

In a general linear PIR scheme, each server transmits $K$ blocks where the block $k\in[1:K]$ consists of all possible types of $k$-sums. This symmetric structure is a consequence of the privacy requirement \cite{SunJafar2016_PIR_Capacity}. Moreover, each type of $k$-sums appears in $\alpha_{N,K}(k)$ distinct instances, by involving different chunks from the corresponding files in the $k$-sum. Also, each distinct instance mentioned above appears $\beta_{N,K}(k)$ times with possibly different coefficient vectors. Thus in total, each block consists of $\binom{K}{k} \alpha_{N,K}(k) \beta_{N,K}(k)$ chunks which results in the total load of 
\begin{equation}\label{eq:p_def}
p(N,K) \triangleq c_{N,K} \sum_{k=1}^K \binom{K}{k} \alpha_{N,K}(k) \beta_{N,K}(k)
\end{equation}
symbols, imposed to each server. Notice that in a general achievable scheme, the chunk size $c$ depends on $N$ and $K$, hence we denote it by $c_{N,K}$.
A linear PIR achievable scheme is defined to be \emph{valid} if it satisfies both the decodability and privacy constraints, defined in Section~\ref{Sec:ProbSetup}.

Now, for the general PIR problem with private side information, we describe an achievable scheme by employing MDS codes on top of the PIR scheme without private side information, similar to the special case proposed in \cite{ChenJafar2017_PIR_PSI}. In order to do this, assume $p(N,K)$ symbols are transmitted from each server in the PIR scheme without side information.  Then, if the user is equipped with side information, it is clear that a subset of these symbols can be constructed directly from the side information, and thus should not be transmitted. Let us denote the number of such symbols by $q(N,K,M)$, which can be calculated as follows
\begin{equation}\label{eq:q_def}
q(N,K,M) \triangleq  c_{N,K} \sum_{k=1}^M \binom{M}{k} \alpha_{N,K}(k) \beta_{N,K}(k).
\end{equation}
Now, consider an MDS code\footnote{In the following, we may remove the dependency of functions $p$ and $q$ on $N$, $K$, and $M$ wherever it is clear from the context.} $[2p-q,p]$ over the finite field $\mathbb{F}$ where each server encodes its $p$ symbols with such an MDS code. Then, instead of sending the original $p$ symbols, each server only transmits $p-q$ symbols corresponding to the non-systematic part of such a code. Accordingly, the total number of symbols, per decoded file, required to be transmitted by all servers, in the presence of private side information, is
\[
D^{\mathsf{PSI}}(N,K,P,M) = \frac{N}{PL} \Big[ p(N,K)-q(N,K,M) \Big].
\]
It can be easily verified that the new scheme is valid if the original scheme is valid. That is because, first, the user can recover the remaining $p-q$ elements from the coded symbols and the side information due to the MDS code properties. Second, the new scheme sends the same set of queries as the original scheme which satisfy the privacy constraint. Hence, the total load of new scheme is $p-q$ symbols per server.

The above construction is also used in \cite{ChenJafar2017_PIR_PSI} on top of the PIR problem studied in \cite{SunJafar2016_PIR_Capacity}. The interesting finding of \cite{ChenJafar2017_PIR_PSI} is that the load of the PIR problem with private side information of size $M$ is equal to the load of a PIR problem without private side information where the library size is reduced by $M$, \ie, $K-M$.

In this section, we use the general construction introduced above to develop an achievable scheme for the multi-message PIR scheme with private side information, based on the original scheme proposed in \cite{Banawan2017}. Thus, the main question to be answered is whether in this case providing the user with private side information of size $M$ will reduce the effective library size from $K$ to $K-M$. 

The answer to the above question is a \emph{Yes} if the following constraint is satisfied
\begin{align}\label{eq:Achv_Equality_Constraint}
p(N,K)-q(N,K,M) = p(N,K-M),
\end{align}
where $p$ and $q$, defined in \eqref{eq:p_def} and \eqref{eq:q_def}, are determined by the achievable scheme through the coefficients $\alpha_{N,K}(k)$,  $\beta_{N,K}(k)$ and $c_{N,K}$. Thus, it just remains to check if the corresponding coefficients used in the achievable scheme proposed for the multi-message problem without private side information in \cite{Banawan2017} satisfy \eqref{eq:Achv_Equality_Constraint} or not. In the rest of this section we answer this question in the affirmative.

\subsection{Analysis of the Achievable Scheme for $P \geq \frac{K-M}{2}$}
For this regime, considering the scheme introduced in  \cite{Banawan2017} and without going into the details, one can verify the followings
\begin{align*}
   \alpha_{N,K}(k) &= \left\{
     \begin{array}{lcl}
       1  &:&  k=1,\\
       0  &:&  k=2,\dots,K-1, \\
       N-1  &:&  k=K,
     \end{array}
   \right.
\end{align*}  

\begin{align*} 
   \beta_{N,K}(k) &= \left\{
        \begin{array}{lcl}
          1  &:&  k=1,\\
          0  &:&  k=2,\dots,K-1, \\
          P  &:&  k=K,
        \end{array}
      \right.
\end{align*}
and
$c_{N,K} = \frac{L}{N^2}$. 
Then, we  have
\begin{align*}
\hspace{0pt} p(N,K)-q(N,K,M) 
 &= c_{N,K} \sum_{k=1}^K \binom{K}{k} \alpha_{N,K}(k) \beta_{N,K}(k) \\
&\quad -  c_{N,K}\sum_{k=1}^M \binom{M}{k} \alpha_{N,K}(k) \beta_{N,K}(k) \\ 
&= \frac{L}{N^2} \Big( K-M+P(N-1) \Big).
\end{align*}
On the other hand we have
\begin{align*}
p(N,K-M) & \\
&\hspace{-40pt} = c_{N,K-M} \sum_{k=1}^{K-M} \binom{K-M}{k} \alpha_{N,K-M}(k) \beta_{N,K-M}(k) \\ \nonumber
&\hspace{-40pt} = \frac{L}{N^2} \Big( K-M+P(N-1) \Big)
\end{align*}
which confirms \eqref{eq:Achv_Equality_Constraint}, and concludes the proof.

\subsection{Analysis of the Achievable Scheme for $P \leq \frac{K-M}{2}$}
By inspecting the scheme introduced in \cite{Banawan2017} for the case of $P \leq \frac{K-M}{2}$, we can verify that
\begin{align*}
\alpha_{N,K}(k) &= \sum_{i=1}^P \gamma^{N,K}_i r_i^{K-P-k} 
= r^{K-P-k} \sum_{i=1}^P \gamma^{N,K}_i,
\end{align*}
where according to \cite{Banawan2017} we have $r_i=r$ and  $r$ satisfies $(1+\frac{1}{r})^K = N^{K/P }$. Here, $(\gamma^{N,K}_1, \ldots,\gamma^{N,K}_P)$ is the solution of the linear equation \cite[Eq.~(64)]{Banawan2017}. Notice that we do not need the exact values of $\gamma^{N,K}_i$'s since they will cancel out later from our computations. Moreover, we have
$\beta_{N,K}(k) = 1$ in this regime.

The constant $c_{N,K}$ of the scheme in \cite{Banawan2017} can be derived as follows. Each $k$-sum transmitted by each server can be one of the two following types. The first type includes those sums containing at least some chunks from the requested files and maybe some chunks from the not-requested files. The second type only consists of sums containing not-requested file chunks, which are being transmitted due to the privacy requirements. For the decodability constraint we require that the total number of useful equations (equations containing at least one chunk from the $P$ requested files) transmitted by all the servers should be equal to the total number of symbols required to recover the $P$ requested files (i.e., $PL$ symbols). This will result in
\begin{align*}
c_{N,K} = \frac{1}{N}\frac{PL}{ \sum_{k=1}^K \binom{K}{k} \alpha_{N,K}(k) - \sum_{k=1}^{K-P}  \binom{K-P}{k} \alpha_{N,K}(k) }.
\end{align*}
Hence, to verify \eqref{eq:Achv_Equality_Constraint} we can proceed as follows
\begin{align*}
&\hspace{-15pt} p(N,K) - q(N,K,M) \\
&= \frac{PL}{N}\frac{ \sum_{k=1}^K \binom{K}{k}\alpha_{N,K}(k) - \sum_{k=1}^M \binom{M}{k} \alpha_{N,K}(k) }{ \sum_{k=1}^K \binom{K}{k} \alpha_{N,K}(k) - \sum_{k=1}^{K-P}  \binom{K-P}{k} \alpha_{N,K}(k) } \\
&= \frac{PL}{N} \frac{ \sum_{k=1}^K \binom{K}{k} r^{K-P-k} -\sum_{k=1}^M \binom{M}{k} r^{K-P-k} }{ \sum_{k=1}^K \binom{K}{k} r^{K-P-k} - \sum_{k=1}^{K-P}  \binom{K-P}{k} r^{K-P-k} }\\
&= \frac{PL}{N} \frac{ \sum_{k=1}^K \binom{K}{k} r^{-k} -\sum_{k=1}^M \binom{M}{k} r^{-k} }{ \sum_{k=1}^K \binom{K}{k} r^{-k} - \sum_{k=1}^{K-P}  \binom{K-P}{k} r^{-k} }\\
&= \frac{PL}{N} \frac{ (1+r^{-1})^K - (1+r^{-1})^M }{ (1+r^{-1})^K - (1+r^{-1})^{K-P} } \\
&= \frac{PL}{N} \frac{ N^{K/P} - N^{M/P} }{ N^{K/P} - N^{(K-P)/P}  } \\
&= \frac{PL}{N} \frac{1- \left(\frac{1}{N}\right)^{\frac{K-M}{P}} }{ 1 - \frac{1}{N}}\\
&= \frac{PL}{N} \cdot D^{\mathsf{PSI}}(N,K-M,P,0).
\end{align*}
which is equal to $p(N,K-M)$ according to \cite{Banawan2017}. This concludes the proof.

\section{The Converse Proof}\label{Sec:Converse}
\subsection{The Converse Argument for $P \geq \frac{K-M}{2}$}
To prove the converse for this case, let us proceed as follows.
Without loss of generality we assume $\mc{S}=\{1,\ldots,M\}$, $\mc{P}=\{M+1,\ldots,M+P+1\}$. 
Then, we can state the following lemma.
\begin{lemma}\label{lem:ConvHighP_lem1}
	Under the assumptions of Theorem~\ref{thm:MainResult}, we have
	\begin{align}
	(K-M)L &\le  \sum_{n=1}^N H(A^{\mc{S},\mc{P}}_n | Q, W_\mc{S}) + (K-M-P)L \notag\\
	&\quad - H(A_1 | Q, W_\mc{S}, W_\mc{P}).
	\end{align}
\end{lemma}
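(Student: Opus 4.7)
The first step in my plan is to reformulate the displayed inequality into an equivalent but cleaner form. Since $(K-M)L - (K-M-P)L = PL$, transposing the $(K-M-P)L$ term shows that the lemma is equivalent to $PL + H(A_1 \mid Q, W_\mc{S}, W_\mc{P}) \le \sum_{n=1}^N H(A^{\mc{S},\mc{P}}_n \mid Q, W_\mc{S})$. It is this reformulation I would prove directly.

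The central ingredient is the identity $I(W_\mc{P}; A_{[1:N]} \mid W_\mc{S}, Q) = PL$. The upper bound comes from $H(W_\mc{P} \mid W_\mc{S}, Q) = H(W_\mc{P}) = PL$: the first equality uses that the queries depend only on $\mc{P}, \mc{S}$, and the user's private randomness and are therefore independent of the library contents, while the second uses that the $W_i$'s are i.i.d.\ uniform of length $L$. The matching lower bound comes from decodability, which gives $H(W_\mc{P} \mid W_\mc{S}, Q, A_{[1:N]}) = 0$. Writing the mutual information on the answer side then yields $H(A_{[1:N]} \mid W_\mc{S}, Q) = PL + H(A_{[1:N]} \mid W_\mc{S}, W_\mc{P}, Q)$.

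From here two routine entropy inequalities close the argument: subadditivity gives $H(A_{[1:N]} \mid W_\mc{S}, Q) \le \sum_{n=1}^N H(A_n \mid W_\mc{S}, Q)$, and dropping $A_2,\ldots,A_N$ from a joint entropy gives $H(A_{[1:N]} \mid W_\mc{S}, W_\mc{P}, Q) \ge H(A_1 \mid W_\mc{S}, W_\mc{P}, Q)$. Combining the identity with these two bounds produces $PL + H(A_1 \mid W_\mc{S}, W_\mc{P}, Q) \le \sum_{n=1}^N H(A_n \mid W_\mc{S}, Q)$, and re-adding $(K-M-P)L$ to both sides restores the lemma in its stated form.

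I do not foresee a serious obstacle, since every step is a textbook entropy manipulation; the whole content is essentially the two-way decomposition of $I(W_\mc{P}; A_{[1:N]} \mid W_\mc{S}, Q)$. The one subtlety I would be careful to flag is the implicit independence of $Q$ from the library that is used to justify $H(W_\mc{P} \mid W_\mc{S}, Q) = PL$. This is standard in the PIR literature and consistent with the description of query generation in Section~\ref{Sec:ProbSetup}, but since it is not isolated as an axiom there, I would state it explicitly in the write-up so that the chain of equalities is unambiguous.
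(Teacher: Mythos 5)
Your proof is correct, but it takes a genuinely different and more economical route than the paper's. After the rearrangement to $PL + H(A_1 \mid Q, W_\mc{S}, W_\mc{P}) \le \sum_n H(A^{\mc{S},\mc{P}}_n \mid Q, W_\mc{S})$, you establish it from a single mutual-information decomposition $I(W_\mc{P}; A^{\mc{S},\mc{P}}_{[1:N]} \mid Q, W_\mc{S}) = PL$, combined with subadditivity of entropy and dropping $A_{[2:N]}$ from a joint entropy. That argument uses only decodability of $W_\mc{P}$, independence of the library from the queries, and the chain rule — it never needs privacy. The paper's proof is considerably heavier: it starts from $(K-M)L = H(W_{[1:K]\setminus\mc{S}})$, writes this as the joint entropy of the answers over \emph{all} $\beta = \binom{K-M}{P}$ possible demand sets $\mc{P}_1, \ldots, \mc{P}_\beta$, and repeatedly invokes the privacy-symmetry result (Lemma~1 of \cite{Banawan2017}) to collapse the server-1 answers $A^{\mc{S},\mc{P}_i}_1$ into a single $A_1$; it also needs the decodability of the remaining $K-M-P$ files from the other answer sets to peel off the $(K-M-P)L$ term. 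Both routes land on the identical inequality, so the elaborate machinery buys nothing additional for this particular lemma; it is present mainly because the same apparatus (multiple demand sets, the Banawan--Ulukus symmetry lemma) is genuinely required in the companion Lemma~\ref{lem:ConvLowP_lem1} for the regime $P \le \frac{K-M}{2}$, and the authors preserve one uniform style. Your write-up should indeed flag, as you planned, the independence of $Q$ from the library (so that $H(W_\mc{P} \mid Q, W_\mc{S}) = PL$) — this follows from the problem setup because $Q^{\mc{S},\mc{P}}$ is a function only of $(\mc{S}, \mc{P})$ and user randomness, not of the file contents — and should also make the trivial identification $A_1 = A^{\mc{S},\mc{P}}_1$ explicit so the reader sees the lemma's $A_1$ and yours are the same object.
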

\begin{proof}
	The proof is provided in the Appendix.
\end{proof}

Now, we can lower bound $D^{\mathsf{PSI}}$ as follows
\begin{align}
D^{\mathsf{PSI}} &= \frac{1}{PL}\sum_{n=1}^{N} H(A^{\mc{S},\mc{P}}_n) \notag\\
&\ge  \frac{1}{PL}\sum_{n=1}^{N} H(A^{\mc{S},\mc{P}}_n | Q, W_{\mc{S}}) \notag\\
&\stackrel{\text{(a)}}{\ge}  \frac{1}{PL} \Big(PL + H(A_1 | Q, W_{\mc{S}}, W{\mc{P}}) \Big) \notag\\
&\stackrel{\text{(b)}}{\ge} 1 + \frac{K-M-P}{NP} \notag
\end{align}
where (a) follows from Lemma~\ref{lem:ConvHighP_lem1} and (b) holds due the following lemma, Lemma~\ref{lem:InterferenceLowerBound}.
This concludes the proof.

\begin{lemma}\label{lem:InterferenceLowerBound}
If $P\ge \frac{K-M}{2}$, then we have
\[
H(A_1 | Q, W_{\mc{S}}, W_{\mc{P}}) \ge \frac{(K-M-P)L}{N}.
\]
\end{lemma}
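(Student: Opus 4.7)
My plan is to combine a server-symmetry reduction with a decoding argument in a carefully chosen virtual scenario.

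First, I would reduce the per-server statement to a joint-entropy statement using symmetry over server labels. Since any scheme can be symmetrized over the $N$ server permutations without changing the load, we may take $H(A_n | Q, W_{\mc{S}}, W_{\mc{P}})$ to be the same for every $n \in [1:N]$. Subadditivity of entropy then gives
\[
H(A_1 | Q, W_{\mc{S}}, W_{\mc{P}}) \;=\; \frac{1}{N}\sum_{n=1}^{N} H(A_n | Q, W_{\mc{S}}, W_{\mc{P}}) \;\ge\; \frac{1}{N}\, H(A_{[1:N]} | Q, W_{\mc{S}}, W_{\mc{P}}),
\]
so it suffices to show $H(A_{[1:N]} | Q, W_{\mc{S}}, W_{\mc{P}}) \ge (K-M-P)L$.

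Second, I would rewrite this joint entropy using mutual information. Let $\mc{T} := [1:K]\setminus(\mc{S}\cup\mc{P})$, so $|\mc{T}| = K-M-P$. Because $A_{[1:N]}$ is a deterministic function of $(Q, W_{1:K})$, and $W_{\mc{T}}$ is independent of $(Q, W_{\mc{S}}, W_{\mc{P}})$ (files are i.i.d.\ and $Q$ is chosen independently of the library), we have
\[
H(A_{[1:N]} | Q, W_{\mc{S}}, W_{\mc{P}}) \;=\; I(A_{[1:N]}; W_{\mc{T}} | Q, W_{\mc{S}}, W_{\mc{P}}) \;=\; (K-M-P)L - H(W_{\mc{T}} | A_{[1:N]}, Q, W_{\mc{S}}, W_{\mc{P}}),
\]
reducing the task to proving $H(W_{\mc{T}} | A_{[1:N]}, Q, W_{\mc{S}}, W_{\mc{P}}) = 0$.

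Third, I would establish this last identity via a virtual request. Choose any subset $\mc{P}_2 \subseteq \mc{P}$ with $|\mc{P}_2| = 2P - K + M$, which is nonnegative \emph{precisely} because $P \ge (K-M)/2$; set $\mc{P}' := \mc{T} \cup \mc{P}_2$. Then $|\mc{P}'| = P$ and $\mc{P}' \cap \mc{S} = \varnothing$, so $(\mc{S},\mc{P}')$ is a valid request pair. By the privacy constraint, the queries $Q^{\mc{S},\mc{P}'}$ have the same distribution as $Q^{\mc{S},\mc{P}}$, and since each server's answer function depends only on its query and the library, the joint tuple $(A^{\mc{S},\mc{P}'}_{[1:N]}, Q^{\mc{S},\mc{P}'}, W_{1:K})$ is identically distributed to $(A^{\mc{S},\mc{P}}_{[1:N]}, Q^{\mc{S},\mc{P}}, W_{1:K})$. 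The virtual-scenario decodability gives $W_{\mc{P}'} \supseteq W_{\mc{T}}$ as a function of $(A^{\mc{S},\mc{P}'}_{[1:N]}, Q^{\mc{S},\mc{P}'}, W_{\mc{S}})$, and extra conditioning on $W_{\mc{P}}$ cannot hurt; thus $H(W_{\mc{T}} | A^{\mc{S},\mc{P}}_{[1:N]}, Q^{\mc{S},\mc{P}}, W_{\mc{S}}, W_{\mc{P}}) = 0$, completing the argument.

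The main obstacle I expect is in the third step: the stated privacy constraint is per server, whereas the equivalence of the real and virtual scenarios appeals to the \emph{joint} distribution of $Q$ being invariant under the switch $\mc{P} \leftrightarrow \mc{P}'$. Making this rigorous will require either an explicit symmetrization of the scheme (averaging over request relabelings, which preserves the load) to upgrade per-server privacy to joint privacy for the purposes of the converse, or an averaging of the inequality over a family of virtual requests $\mc{P}'$ so that only the per-server marginals are used. Choosing the family $\mc{P}_2$ to exploit the condition $P \ge (K-M)/2$ tightly is also where the regime assumption enters; outside this regime, no virtual $\mc{P}'$ can both contain all of $\mc{T}$ and have size $P$, which is exactly why the analysis bifurcates at $P = (K-M)/2$.
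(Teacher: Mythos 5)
Your overall strategy---pass to a joint entropy, lower bound it by the mutual information with $W_{\mc{T}}$, and invoke decodability for a virtual request $\mc{P}'\supseteq\mc{T}$ of size $P$ (which exists precisely when $P\ge(K-M)/2$)---is the same one the paper uses, and your identification of where the regime condition enters is exactly right. The gap you flag at the end is, however, a genuine one and your two proposed remedies do not close it cleanly: after your Step~1 you are holding the \emph{joint} entropy $H(A^{\mc{S},\mc{P}}_{[1:N]}\mid Q,W_{\mc{S}},W_{\mc{P}})$ for the \emph{real} request, and your Step~3 then needs the joint tuple $(A^{\mc{S},\mc{P}}_{[1:N]},Q,W_{1:K})$ to be equidistributed with $(A^{\mc{S},\mc{P}'}_{[1:N]},Q,W_{1:K})$; the stated privacy constraint $I(Q^{\mc{S},\mc{P}}_n;\mc{P},\mc{S})=0$ only controls the per-server marginals and does not imply the joint query distribution is invariant under $\mc{P}\leftrightarrow\mc{P}'$, so $H(W_{\mc{T}}\mid A^{\mc{S},\mc{P}}_{[1:N]},Q,W_{\mc{S}},W_{\mc{P}})=0$ does not follow.

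The fix the paper uses is simply to reverse the order of two steps: switch the request from $\mc{P}$ to $\mc{P}'$ \emph{while still at the per-server level}, using the standard symmetry lemma $H(A^{\mc{S},\mc{P}}_n\mid Q,W_{\mc{J}})=H(A^{\mc{S},\mc{P}'}_{n'}\mid Q,W_{\mc{J}})$ for all $n,n'$ and all valid $\mc{P},\mc{P}'$ (this is \cite[Lemma~1]{Banawan2017}, a consequence of the per-server privacy constraint together with determinism of the answers), and \emph{only then} apply subadditivity to obtain $\sum_n H(A^{\mc{S},\mc{P}'}_n\mid Q,W_{[1:M+P]})\ge H(A^{\mc{S},\mc{P}'}_{[1:N]}\mid Q,W_{[1:M+P]})$. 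Now the joint entropy is that of the virtual-request answers, so decodability gives $W_{\mc{T}}$ directly as a function of $(A^{\mc{S},\mc{P}'}_{[1:N]},Q,W_{\mc{S}})$ with no coupling argument needed. Concretely, the paper runs the chain in the opposite direction from yours: starting from $(K-M-P)L=H(W_{[M+P+1:K]}\mid Q,W_{[1:M+P]})$, it subtracts the zero term $H(W_{[M+P+1:K]}\mid A^{\mc{S},[K-P+1:K]}_{[1:N]},Q,W_{[1:M+P]})$, converts to $H(A^{\mc{S},[K-P+1:K]}_{[1:N]}\mid Q,W_{[1:M+P]})$ by determinism, applies subadditivity, and finishes with the per-server symmetry lemma to land on $N\,H(A_1\mid Q,W_{\mc{S}},W_{\mc{P}})$. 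Your proposal is thus a valid blueprint, but the fully rigorous version needs this reordering rather than a scheme symmetrization over request relabelings.
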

\begin{proof}
The proof is provided in the Appendix.
\end{proof}

\subsection{The Converse Argument for $P \leq \frac{K-M}{2}$}
To prove the converse for this regime, we can state the following lemma.

\begin{lemma}\label{lem:ConvLowP_lem1}
	Under the assumption of Theorem~\ref{thm:MainResult}, we have
	\begin{align*}
	(K-M)L &\le  N H(A_1 | Q, W_\mc{S}) \\
	&\quad+ (N-1)[NH(A_1 | Q, W_\mc{S}) - PL] \\
	&\quad + (K-M-2P)L \\
	&\quad - H(A_1 | W_{[M+1:M+2P]}, Q, W_\mc{S} ).
	\end{align*}
\end{lemma}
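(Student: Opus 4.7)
The plan is to mimic the standard two-round decodability-plus-privacy argument from the PIR converse literature, but applied twice: once to peel off the intended request set, and a second time to a disjoint ``phantom'' request set, using the file symmetry guaranteed by the privacy constraint. Since $P \le (K-M)/2$, we can choose two disjoint $P$-subsets $\mc{P}_1 = \{M+1,\ldots,M+P\}$ and $\mc{P}_2 = \{M+P+1,\ldots,M+2P\}$ inside the non-side-information indices, which is exactly the assumption that makes this strategy available.

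\emph{Round 1.} Abbreviate $H_1 \triangleq H(A_1 \mid Q, W_\mc{S})$. Write $N H_1 = \sum_{n=1}^N H(A_n \mid Q, W_\mc{S}) \ge H(A_{[1:N]} \mid Q, W_\mc{S})$ by privacy-induced symmetry and sub-additivity. Because $W_{\mc{P}_1}$ is decodable from $(A_{[1:N]}, Q, W_\mc{S})$ and the files are independent of $Q$, the chain rule gives $H(A_{[1:N]} \mid Q, W_\mc{S}) = H(W_{\mc{P}_1} \mid Q, W_\mc{S}) + H(A_{[1:N]} \mid Q, W_\mc{S}, W_{\mc{P}_1}) = PL + H(A_{[1:N]} \mid Q, W_{[1:M+P]})$. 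Dropping $N-1$ answers yields
\begin{equation*}
N H_1 \;\ge\; PL + H(A_1 \mid Q, W_{[1:M+P]}).
\end{equation*}

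\emph{Round 2.} Now treat $W_{[1:M+P]}$ as enlarged side information and the query set as requesting $\mc{P}_2$ instead of $\mc{P}$. Here the key observation is that, by privacy, the joint distribution of $(Q, A_{[1:N]}, W_{[1:K]})$ is invariant under relabeling of which subset is the intended request, so every conditional entropy involving only these variables is the same whether one imagines the request as $\mc{P}_1$ or $\mc{P}_2$. Applying exactly the Round 1 manipulation, but with side-information index set $[1:M+P]$ and request set $\mc{P}_2$, yields
\begin{equation*}
N H(A_1 \mid Q, W_{[1:M+P]}) \;\ge\; PL + H(A_1 \mid Q, W_{[1:M+2P]}).
\end{equation*}

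\emph{Combining.} Multiplying the Round 1 bound by $N$ and chaining with the Round 2 bound gives $N^2 H_1 \ge NPL + N H(A_1 \mid Q, W_{[1:M+P]}) \ge (N+1)PL + H(A_1 \mid Q, W_{[1:M+2P]})$. Rewriting $N^2 H_1 = N H_1 + (N-1)\cdot N H_1$ and adding the trivial identity $(K-M)L = 2PL + (K-M-2P)L$ to both sides (so that the $2PL$ cancels against the $(N+1)PL - (N-1)PL$ term), one obtains exactly the inequality in the lemma, with $H(A_1 \mid Q, W_{[1:M+2P]}) = H(A_1 \mid W_{[M+1:M+2P]}, Q, W_\mc{S})$ since $\mc{S}=[1:M]$.

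The only delicate step is the invocation of privacy in Round 2: one must argue cleanly that the inequality $N H(A_1 \mid Q, W_{[1:M+P]}) \ge H(A_{[1:N]} \mid Q, W_{[1:M+P]})$ together with the decodability of $W_{\mc{P}_2}$ is justified even though the user's actual request is $\mc{P}$, not $\mc{P}_2$. This follows from the fact that the map $(\mc{S},\mc{P}) \mapsto \mathrm{dist}(Q, A, W)$ is constant by the privacy constraint, so Round 1's argument applies verbatim after renaming indices. Beyond this, the proof is pure bookkeeping with chain rules and the independence of the files from $Q$.
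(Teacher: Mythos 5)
Your proof is correct, and the algebraic reduction you start from is the right observation: the lemma's inequality is exactly $N^2 H(A_1\mid Q,W_\mc{S}) \ge (N+1)PL + H(A_1\mid W_{[M+1:M+2P]},Q,W_\mc{S})$, and that is what Rounds 1 and 2 establish. Each round is the standard peel-off step --- symmetry (Lemma~1 of \cite{Banawan2017}), sub-additivity, decodability of the chosen $P$-subset, and dropping $N-1$ of the answers --- and chaining them gives the bound. The one point you flag as delicate, using $\mc{P}_2$ as the request in Round~2, is indeed justified exactly by the privacy constraint as encoded in that symmetry lemma; your informal ``relabel the intended request'' phrasing is the right intuition, though in a polished write-up it is cleanest to cite Lemma~1 of \cite{Banawan2017} at both the equality $N H(A_1\mid Q,W_{[1:M+P]}) = \sum_n H(A_n^{\mc{S},\mc{P}_2}\mid Q,W_{[1:M+P]})$ and the final replacement $H(A_1^{\mc{S},\mc{P}_2}\mid\cdot)=H(A_1\mid\cdot)$.

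This is, however, a genuinely different route from the paper's. The paper starts from $(K-M)L = H(A_1,A^{\mc{S},\mc{P}_1}_{[2:N]},\ldots,A^{\mc{S},\mc{P}_\beta}_{[2:N]}\mid Q,W_\mc{S})$ (equation~\eqref{eq:eq99}, obtained by running through all $\beta=\binom{K-M}{P}$ request subsets), then splits this big joint entropy into three blocks --- the $\mc{P}_1$ answers, the $\mc{P}_2$ answers, and the rest --- and bounds the middle block via the separate Lemma~\ref{lem:InterferenceCondLemma}, whose proof inserts cross-request answer tuples $(A^{\mc{S},\mc{P}_1}_{[1:n-1]},A^{\mc{S},\mc{P}_2}_n,A^{\mc{S},\mc{P}_1}_{[n+1:N]})$. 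Your version sidesteps both the $\beta$-fold enumeration and Lemma~\ref{lem:InterferenceCondLemma} entirely: you never form the joint entropy over all $\mc{P}_i$'s, and you invoke decodability only for the two sets $\mc{P}_1,\mc{P}_2$. The payoff is a noticeably shorter and more transparent argument in the style of the Sun--Jafar recursion. What the paper's decomposition buys in exchange is that it is expressed directly in the form that feeds into the subsequent induction on the library (their Eq.~\eqref{eq:ConvLowP_InductionRelation}), and it makes explicit where the remaining files $W_{[M+2P+1:K]}$ are accounted for. Both approaches are sound; yours is the cleaner self-contained proof of the lemma, the paper's is more tightly coupled to the machinery it reuses elsewhere.
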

\begin{proof}
The	proof is provided in the Appendix
\end{proof}
The result stated in Lemma~\ref{lem:ConvLowP_lem1} can be equivalently written as
\begin{align}\label{eq:ConvLowP_InductionRelation}
N H(A_1 | Q, W_\mc{S}) &\ge \left(1+\frac{1}{N} \right)PL \notag\\
&\quad + \frac{1}{N} H(A_1 | W_{[M+1:M+2P]},Q,W_\mc{S}).
\end{align}

Equation~\eqref{eq:ConvLowP_InductionRelation} provides a lower bound on $H(A_1 | Q, W_\mc{S})$. Applying the same chain of inequalities used to prove Lemma~\ref{lem:ConvLowP_lem1}, one can derive a similar inequality for $H(A_1 | W_{[M+1:M+2P]},Q,W_\mc{S})$. Hence, by applying \eqref{eq:ConvLowP_InductionRelation} inductively over the library size, one can derive a lower bound for $H(A_1 | Q, W_\mc{S})$, similar to \cite[Equation~(125)]{Banawan2017}.
The important difference of above calculations with those appeared in \cite{Banawan2017} is that here we start the induction from the message $W_{M+1}$ while the starting point of \cite{Banawan2017} is the message $W_1$.

Similar to the path followed from \cite[Equation~(124)]{Banawan2017} to \cite[Equation~(133)]{Banawan2017}, we arrive at 
\begin{align}
D^{\mathsf{PSI}} \ge \left( \frac{1- \left(\frac{1}{N}\right)^{\left\lfloor \frac{K-M}{P} \right\rfloor} }{1-\frac{1}{N}}  + \frac{\left( \frac{K-M}{P} - \left\lfloor \frac{K-M}{P} \right\rfloor \right)}{N^{\left\lfloor \frac{K-M}{P} \right\rfloor}}   \right) \notag
\end{align}
which completes the proof for the case of $P\le \frac{K-M}{2}$ and $\frac{K-M}{P}\in \mathbb{N}$.

\section{Conclusions}\label{Sec:Conclude}
In this paper, we have characterized the capacity of a multi-message PIR problem where the user has access to a private side information, for certain regimes. Our result shows that the role of private side information is equivalent to reducing the effective library size by the size of side information. This result, along with a similar conclusion for a single-message PIR setup in \cite{ChenJafar2017_PIR_PSI}, motivate the conjecture that for any PIR scheme, adding a private side information will be equivalent to reducing the effective library size.

\bibliographystyle{IEEEtran}

\bibliography{pir}

\newpage

\appendix
This appendix contains the omitted proofs from the main body of paper.

\begin{proof}[{Proof of Lemma~\ref{lem:ConvHighP_lem1}}]
To prove this lemma we proceed as follows
\begin{align}
&\hspace{-10pt} (K-M)L = H(W_{[1:K]\setminus\mc{S}}) \notag\\
&\stackrel{\text{(a)}}{=} H(W_{[1:K]\setminus\mc{S}} | Q, W_\mc{S}) \notag\\
&\stackrel{\text{(b)}}{=} H(W_{[1:K]\setminus\mc{S}} | Q, W_\mc{S}) \notag\\
&\quad - H(W_{[1:K]\setminus\mc{S}} | A^{\mc{S},\mc{P}_1}_{[1:N]}, \ldots, A^{\mc{S},\mc{P}_\beta}_{[1:N]},  Q, W_\mc{S}) \notag\\
&= I(W_{[1:K]\setminus\mc{S}} ; A^{\mc{S},\mc{P}_1}_{[1:N]}, \ldots, A^{\mc{S},\mc{P}_\beta}_{[1:N]} | Q, W_\mc{S}) \notag\\
&\stackrel{\text{(c)}}{=} H(A^{\mc{S},\mc{P}_1}_{[1:N]}, \ldots, A^{\mc{S},\mc{P}_\beta}_{[1:N]} | Q, W_\mc{S}) \notag\\
&\stackrel{}{=} H(A_1, A^{\mc{S},\mc{P}_1}_{[2:N]}, \ldots, A^{\mc{S},\mc{P}_\beta}_{[2:N]} | Q, W_\mc{S}) \label{eq:eq99}\\
&\stackrel{\text{(d)}}{=} H(A_1, A^{\mc{S},\mc{P}_1}_{[2:N]} | Q, W_\mc{S}) \notag\\
&\quad + H(A^{\mc{S},\mc{P}_2}_{[2:N]}, \ldots, A^{\mc{S},\mc{P}_\beta}_{[2:N]} | A_1, A^{\mc{S},\mc{P}_1}_{[2:N]}, Q, W_\mc{S}) \notag\\
&\stackrel{\text{(e)}}{=} H(A_1, A^{\mc{S},\mc{P}_1}_{[2:N]} | Q, W_\mc{S}) \notag\\
&\quad + H(A^{\mc{S},\mc{P}_2}_{[2:N]}, \ldots, A^{\mc{S},\mc{P}_\beta}_{[2:N]} | A_1, A^{\mc{S},\mc{P}_1}_{[2:N]}, Q, W_\mc{S}, W_\mc{P}) \notag\\
&\stackrel{\text{(f)}}{\le} \sum_{n=1}^N H(A^{\mc{S},\mc{P}}_n | Q, W_\mc{S}) \notag\\
&\quad + H(A^{\mc{S},\mc{P}_2}_{[2:N]}, \ldots, A^{\mc{S},\mc{P}_\beta}_{[2:N]} | A_1, Q, W_\mc{S}, W_\mc{P}) \notag\\
&\stackrel{\text{(g)}}{=} \sum_{n=1}^N H(A^{\mc{S},\mc{P}}_n | Q, W_\mc{S}) \notag\\
&\quad + H(A^{\mc{S},\mc{P}_2}_{[1:N]}, \ldots, A^{\mc{S},\mc{P}_\beta}_{[1:N]} | Q, W_\mc{S}, W_\mc{P}) \notag\\
&\quad - H(A_1 | Q, W_\mc{S}, W_\mc{P}) \notag\\
&\stackrel{\text{(h)}}{=} \sum_{n=1}^N H(A^{\mc{S},\mc{P}}_n | Q, W_\mc{S}) + (K-M-P)L \notag\\
&\quad - H(A_1 | Q, W_\mc{S}, W_\mc{P}) \label{eq:H_K-S_Simplified}
\end{align}
where $\beta \triangleq \binom{K-M}{P}$ and $\mc{P}_i$'s are all subsets of size $P$ from $[1:K]\setminus \mc{S}$. Notice that without loss of generality we can choose $\mc{P}_1 = \mc{P}$.
In the above chain of equations 
(a) holds since $W_{[1:K]\setminus\mc{S}}$ is independent from $Q$ and $W_\mc{S}$, (b) is true because $W_{[1:K]\setminus\mc{S}}$ can be decoded from $A^{\mc{S},\mc{P}_1}_{[1:N]}, \ldots, A^{\mc{S},\mc{P}_\beta}_{[1:N]}$,  $Q$, and  $W_\mc{S}$,
(c) follows since the answers are deterministic functions of the library and queries,
(d) follows from \cite[Lemma~1]{Banawan2017} (also, see \cite{SunJafar2016_PIR_Capacity}) and the chain rule,
(e) holds since $W_\mc{P}$ can be recovered from the side information $W_\mc{S}$ available at the user and answers $A^{\mc{S},\mc{P}}_{[1:N]}$,
(f) follows from \cite[Lemma~1]{Banawan2017} and the fact that conditioning does not increase entropy,
(g) chain rule
(h) holds since $W_{[1:K]\setminus(\mc{P}\cup\mc{S}) }$ can be recovered from $A^{\mc{S},\mc{P}_2}_{[1:N]}, \ldots, A^{\mc{S},\mc{P}_\beta}_{[1:N]}$, $W_{\mc{S}}$ and the queries.
\end{proof}

\begin{proof}[{Proof of Lemma~\ref{lem:InterferenceLowerBound}}]
	In order to prove the lemma, we can write
\begin{align}
&\hspace{-20pt}(K-M-P)L = H(W_{[M+P+1:K]}) \notag\\
&\stackrel{}{=} H(W_{[M+P+1:K]} | Q, W_{[1:M+P]} ) \notag\\
&\stackrel{\text{(a)}}{=} H(W_{[M+P+1:K]} | Q, W_{[1:M+P]} ) \notag\\
&\quad - H(W_{[M+P+1:K]} | A^{\mc{S},[K-P+1:K]}_{[1:N]}, Q, W_{[1:M+P]} ) \notag\\
&\stackrel{}{=} I(W_{[M+P+1:K]} ; A^{\mc{S},[K-P+1:K]}_{[1:N]} | Q, W_{[1:M+P]} ) \notag\\
&\stackrel{\text{(b)}}{=} H(A^{\mc{S},[K-P+1:K]}_{[1:N]} | Q, W_{[1:M+P]} ) \notag\\
&\stackrel{\text{}}{\le}  \sum_{n=1}^N H(A^{\mc{S},[K-P+1:K]}_n | Q, W_{[1:M+P]}) \notag\\
&\stackrel{\text{(c)}}{=} NH(A_1 | Q, W_{[1:M+P]})\notag\\
&= NH(A_1 | Q, W_\mc{S}, W_\mc{P})\notag,
\end{align}
where (a) follows since $P\ge\frac{K-M}{2}$, and
(b) holds since $A^{\mc{S},[K-P+1:K]}_{[1:N]}$ is a deterministic function of $Q$, $W_{[1:M+P]}$, and $W_{[M+P+1:K]}$, and (c) follows from \cite[Lemma~1]{Banawan2017}. This completes the proof of the lemma.
\end{proof}

\begin{proof}[{Proof of Lemma~\ref{lem:ConvLowP_lem1}}]
Without loss of generality we define $\mc{S}=\{1,\ldots,M\}$, $\mc{P}_1 = \{M+1,\ldots,M+P\}$ and $\mc{P}_2=\{M+P+1,\ldots,M+2P\}$.
Then, starting from \eqref{eq:eq99}, we can write
\begin{align}
&\hspace{-5pt} (K-M)L \stackrel{}{=} H(A_1, A^{\mc{S},\mc{P}_1}_{[2:N]}, \ldots, A^{\mc{S},\mc{P}_\beta}_{[2:N]} | Q, W_\mc{S}) \notag\\
&= H( A_1, A^{\mc{S},\mc{P}_1}_{[2:N]} | Q, W_\mc{S} ) + H( A^{\mc{S},\mc{P}_2}_{[2:N]} |  A_1, A^{\mc{S},\mc{P}_1}_{[2:N]}, Q, W_\mc{S} ) \notag\\
&\quad + H(A^{\mc{S},\mc{P}_3}_{[2:N]}, \ldots, A^{\mc{S},\mc{P}_\beta}_{[2:N]} | A_1, A^{\mc{S},\mc{P}_1}_{[2:N]}, A^{\mc{S},\mc{P}_2}_{[2:N]}, Q, W_\mc{S}) \notag\\
& \stackrel{\text{(a)}}{\le} N H(A_1 | Q, W_\mc{S})  \notag\\
&\quad + H( A^{\mc{S},\mc{P}_2}_{[2:N]} |  A_1, A^{\mc{S},\mc{P}_1}_{[2:N]}, W_{[M+1:M+P]}, Q, W_\mc{S} ) \notag\\
&\quad + H(A^{\mc{S},\mc{P}_3}_{[2:N]}, \ldots, A^{\mc{S},\mc{P}_\beta}_{[2:N]} | A_1, A^{\mc{S},\mc{P}_1}_{[2:N]}, A^{\mc{S},\mc{P}_2}_{[2:N]}, \notag\\
&\hspace{140pt} W_{[M+1:M+2P]}, Q, W_\mc{S}) \notag\\
& \stackrel{\text{(b)}}{\le} N H(A_1 | Q, W_\mc{S})  +  H( A^{\mc{S},\mc{P}_2}_{[2:N]} |  W_{[M+1:M+P]}, Q, W_\mc{S} ) \notag\\
&\quad + H(A^{\mc{S},\mc{P}_3}_{[2:N]}, \ldots, A^{\mc{S},\mc{P}_\beta}_{[2:N]} | A_1, W_{[M+1:M+2P]}, Q, W_\mc{S}) \notag\\
& \stackrel{}{\le} N H(A_1 | Q, W_\mc{S})  +  H( A^{\mc{S},\mc{P}_2}_{[2:N]} |  W_{[M+1:M+P]}, Q, W_\mc{S} ) \notag\\
&\quad + H(A^{\mc{S},\mc{P}_3}_{[1:N]}, \ldots, A^{\mc{S},\mc{P}_\beta}_{[1:N]} |  W_{[M+1:M+2P]}, Q, W_\mc{S} ) \notag\\
&\quad - H(A_1 | W_{[M+1:M+2P]}, Q, W_\mc{S} ) \notag\\
& \stackrel{\text{(c)}}{\le} N H(A_1 | Q, W_\mc{S})  +  H( A^{\mc{S},\mc{P}_2}_{[2:N]} |  W_{[M+1:M+P]}, Q, W_\mc{S} ) \notag\\
&\quad + (K-M-2P)L - H(A_1 | W_{[M+1:M+2P]}, Q, W_\mc{S} ) \notag\\
&\stackrel{\text{(d)}}{\le} N H(A_1 | Q, W_\mc{S}) + (N-1)[NH(A_1 | Q, W_\mc{S}) - PL] \notag\\
&\quad + (K-M-2P)L - H(A_1 | W_{[M+1:M+2P]}, Q, W_\mc{S} ) \notag
\end{align}
where (a) follows from \cite[Lemma~1]{Banawan2017} and the decodability of $W_{[M+1:M+2P]}$ from $(A_1, A^{\mc{S},\mc{P}_1}_{[2:N]}, A^{\mc{S},\mc{P}_2}_{[2:N]},  W_\mc{S})$, (b) is true since conditioning does not increase entropy, (c) follows since messages $W_{M+2P+1},\ldots,W_{K}$ are appeared in one of $\mc{P}_3,\ldots,\mc{P}_\beta$, where (d) follows from Lemma~\ref{lem:InterferenceCondLemma}.	
\end{proof}

\begin{lemma}\label{lem:InterferenceCondLemma}
	Consider $\mc{P}_1,\mc{P}_2\subseteq [1:K]\setminus \mc{S}$, $|\mc{P}_1| = |\mc{P}_2| = P$, and $\mc{P}_1\cap\mc{P}_2 = \varnothing$, then we have
	\[
	H( A^{\mc{S},\mc{P}_2}_{[2:N]} | W_{\mc{P}_1}, Q, W_{\mc{S}} ) \le (N-1) [N H(A_1 | Q, W_{\mc{S}})-PL].	
	\]
\end{lemma}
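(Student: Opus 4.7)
The plan is to combine three standard tools from PIR converses: subadditivity of entropy, the privacy-induced symmetries (in the spirit of \cite[Lemma~1]{Banawan2017}), and the decodability constraint. The goal is to reduce the joint conditional entropy on the left-hand side to a sum of per-server conditional entropies, each controlled by the ``total interference'' quantity $N H(A_1|Q,W_\mc{S})-PL$.

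First I would apply subadditivity over the $N-1$ servers,
\[
H(A^{\mc{S},\mc{P}_2}_{[2:N]}|W_{\mc{P}_1},Q,W_\mc{S}) \le \sum_{n=2}^{N}H(A^{\mc{S},\mc{P}_2}_n|W_{\mc{P}_1},Q,W_\mc{S}),
\]
and next invoke a \emph{role-swap symmetry}: since $\mc{P}_1$ and $\mc{P}_2$ are disjoint subsets of $[1:K]\setminus\mc{S}$ of equal size, the permutation of $[1:K]$ that fixes $\mc{S}$ setwise and exchanges $\mc{P}_1\leftrightarrow\mc{P}_2$ is a problem-preserving relabeling (since the files are i.i.d.\ uniform), giving
\[
H(A^{\mc{S},\mc{P}_2}_n|W_{\mc{P}_1},Q,W_\mc{S}) = H(A^{\mc{S},\mc{P}_1}_n|W_{\mc{P}_2},Q,W_\mc{S}).
\]

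The per-server quantity on the right I would then control using a joint-entropy bound anchored on the decodability of $W_{\mc{P}_1}$. Since $W_{\mc{P}_1}$ is a deterministic function of $(A^{\mc{S},\mc{P}_1}_{[1:N]},Q,W_\mc{S})$, the chain rule gives
\[
H(A^{\mc{S},\mc{P}_1}_{[1:N]}|Q,W_\mc{S}) = PL + H(A^{\mc{S},\mc{P}_1}_{[1:N]}|W_{\mc{P}_1},Q,W_\mc{S}),
\]
and combined with the subadditive bound $H(A^{\mc{S},\mc{P}_1}_{[1:N]}|Q,W_\mc{S})\le N H(A_1|Q,W_\mc{S})$ this yields the total-interference estimate $H(A^{\mc{S},\mc{P}_1}_{[1:N]}|W_{\mc{P}_1},Q,W_\mc{S})\le N H(A_1|Q,W_\mc{S})-PL$. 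I would then use server symmetry from \cite[Lemma~1]{Banawan2017}, together with the disjointness $\mc{P}_1\cap\mc{P}_2=\varnothing$, to transfer this into the per-server inequality $H(A^{\mc{S},\mc{P}_1}_n|W_{\mc{P}_2},Q,W_\mc{S})\le N H(A_1|Q,W_\mc{S})-PL$; summing over $n\in[2:N]$ completes the proof.

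The hard part will be this last transfer. A naive use of ``conditioning reduces entropy'' only gives $H(A^{\mc{S},\mc{P}_1}_n|W_{\mc{P}_2},Q,W_\mc{S})\le H(A_1|Q,W_\mc{S})$, which is too weak to reach the stated bound when $N\ge 3$ and $(N-1)H(A_1|Q,W_\mc{S})<PL$. The argument has to exploit that the full $N$-tuple $A^{\mc{S},\mc{P}_1}_{[1:N]}$ carries at most $N H(A_1|Q,W_\mc{S})$ bits, of which $PL$ are spent decoding $W_{\mc{P}_1}$ (which is hidden from a conditioner who only sees $W_{\mc{P}_2}$, thanks to the disjointness hypothesis); the per-server contribution must therefore fit inside the remaining interference budget $N H(A_1|Q,W_\mc{S})-PL$. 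This is where the assumption $\mc{P}_1\cap\mc{P}_2=\varnothing$ is essential.
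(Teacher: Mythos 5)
Your first step (subadditivity over servers $n=2,\dots,N$) matches the paper, and your ``total interference'' estimate
\[
H(A^{\mc{S},\mc{P}_1}_{[1:N]}\mid W_{\mc{P}_1},Q,W_\mc{S}) \le N H(A_1\mid Q,W_\mc{S})-PL
\]
is correct and in the right spirit. But the crucial ``transfer'' step — turning this bound on the \emph{joint} entropy of the full $N$-tuple into a bound on the \emph{per-server} quantity $H(A^{\mc{S},\mc{P}_2}_n\mid W_{\mc{P}_1},Q,W_\mc{S})$ — is left unproved, and you correctly flag it as the hard part. Neither the cited server-symmetry lemma from \cite[Lemma~1]{Banawan2017} (which equates entropies of answers to different queries, but says nothing about fitting inside an interference budget) nor the role-swap symmetry provides the needed per-server inequality, and as you yourself observe, ``conditioning reduces entropy'' is too weak. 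As written, the argument does not close.

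The missing idea is to go in the opposite direction from what you propose: instead of trying to \emph{shrink} the joint bound down to a single server, the paper \emph{enlarges} the single-server term up to a joint ``mixed'' tuple. Concretely, for each $n$,
\[
H(A^{\mc{S},\mc{P}_2}_n\mid W_{\mc{P}_1},Q,W_\mc{S}) \;\le\; H\bigl(A^{\mc{S},\mc{P}_1}_{[1:n-1]},\, A^{\mc{S},\mc{P}_2}_n,\, A^{\mc{S},\mc{P}_1}_{[n+1:N]}\ \big|\ W_{\mc{P}_1},Q,W_\mc{S}\bigr),
\]
i.e., adjoin the answers of the \emph{other} $N-1$ servers to the $\mc{P}_1$-query. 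Expanding by the chain rule, the entropy of this mixed tuple given $(Q,W_\mc{S})$ is at most $N H(A_1\mid Q,W_\mc{S})$ by \cite[Lemma~1]{Banawan2017}; and the privacy constraint implies this mixed tuple, together with $(Q,W_\mc{S})$, still determines $W_{\mc{P}_1}$ — a server's answer to the $\mc{P}_2$-query is statistically indistinguishable from its answer to the $\mc{P}_1$-query, so the mixed answers remain decodable for $W_{\mc{P}_1}$. Subtracting $H(W_{\mc{P}_1}\mid W_\mc{S})=PL$ then yields the per-$n$ bound $N H(A_1\mid Q,W_\mc{S})-PL$, and summing over $n\in[2:N]$ finishes. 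The disjointness $\mc{P}_1\cap\mc{P}_2=\varnothing$ is what you need so that conditioning on $W_{\mc{P}_1}$ does not trivially determine any part of the $\mc{P}_2$-answers; it is used implicitly here, not via a role-swap symmetry (which the paper does not invoke).

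Your intuition in the final paragraph — that the answer must fit in the residual interference budget because $W_{\mc{P}_1}$ is decodable but hidden — is exactly what the mixed-answer argument formalizes. You diagnosed where the gap is, but did not supply the device that closes it.
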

\begin{proof}[{Proof of Lemma~\ref{lem:InterferenceCondLemma}}]
We can write
\begin{align}
&\hspace{-10pt} H( A^{\mc{S},\mc{P}_2}_{[2:N]} | W_{\mc{P}_1}, Q, W_{\mc{S}} ) \notag\\
&\le \sum_{n=2}^N H( A^{\mc{S},\mc{P}_2}_n | W_{\mc{P}_1}, Q, W_{\mc{S}} ) \notag\\
&\stackrel{}{\le} \sum_{n=2}^N H(  A^{\mc{S},\mc{P}_1}_{[1:n-1]}, A^{\mc{S},\mc{P}_2}_n,  A^{\mc{S},\mc{P}_1}_{[n+1:N]} | W_{\mc{P}_1}, Q, W_{\mc{S}} ) \notag\\
&\stackrel{}{=} \sum_{n=2}^N \Big[ H(  A^{\mc{S},\mc{P}_1}_{[1:n-1]}, A^{\mc{S},\mc{P}_2}_n,  A^{\mc{S},\mc{P}_1}_{[n+1:N]}, W_{\mc{P}_1}  | Q, W_{\mc{S}} ) \notag\\
&\hspace{160pt} - H(W_{\mc{P}_1}  | Q, W_{\mc{S}} ) \Big] \notag\\
&\stackrel{}{=} \sum_{n=2}^N \Big[ H(  A^{\mc{S},\mc{P}_1}_{[1:n-1]}, A^{\mc{S},\mc{P}_2}_n,  A^{\mc{S},\mc{P}_1}_{[n+1:N]}  | Q, W_{\mc{S}} ) \notag\\
&\quad \quad\quad + H( W_{\mc{P}_1} | A^{\mc{S},\mc{P}_1}_{[1:n-1]}, A^{\mc{S},\mc{P}_2}_n,  A^{\mc{S},\mc{P}_1}_{[n+1:N]}, Q, W_{\mc{S}} ) \notag\\
&\hspace{160pt}- H(W_{\mc{P}_1} | W_{\mc{S}} ) \Big] \notag\\
&\stackrel{\text{(a)}}{=} \sum_{n=2}^N \Big[ H(  A^{\mc{S},\mc{P}_1}_{[1:n-1]}, A^{\mc{S},\mc{P}_2}_n,  A^{\mc{S},\mc{P}_1}_{[n+1:N]}  | Q, W_{\mc{S}} ) \notag\\
&\hspace{160pt} - H(W_{\mc{P}_1}  |  W_{\mc{S}} ) \Big] \notag\\
&\stackrel{\text{(b)}}{\le} \sum_{n=2}^N NH(A_1 | Q, W_\mc{S}) - H(W_{\mc{P}_1}  | W_{\mc{S}} ) \notag\\
&\stackrel{}{=} (N-1)N H(A_1 | Q, W_\mc{S}) - PL
\end{align}
where (a) follows from the privacy constraint stated in \cite[Section~6, Page~33]{Banawan2017}, \ie, $W_{\mc{P}_1}$ can be decoded from $\left( A^{\mc{S},\mc{P}_1}_{[1:n-1]}, A^{\mc{S},\mc{P}_2}_n,  A^{\mc{S},\mc{P}_1}_{[n+1:N]}, Q, W_{\mc{S}} \right)$, and (b) follows from \cite[Lemma~1]{Banawan2017}. This completes the proof of the lemma.
\end{proof}

\end{document}